\newtheorem{lem}{Lemma}[section]
\def\tr{{\textrm tr}}
\def\L{\bar L}
\def\PL{P\bar L}
\newif\ifblind
\newif\ifunblind
\begin{document}
\begin{center}
{\large \bf Movement Prediction Using Accelerometers in a Human Population}\\[2ex]
{\small
\ifblind Anonymous Authors \else {\small {\sc Luo Xiao$\null^{1}$, Bing He$\null^{1}$, Annemarie Koster$^{2}$, Paolo Caserotti$^{3}$,
 Brittney Lange-Maia$\null^{4}$, Nancy W. Glynn$\null^{4}$, Tamara Harris$\null^{5}$  and Ciprian M. Crainiceanu$\null^{1}$}}\\[1ex]
{\it The Johns Hopkins University, Baltimore, MD, U.S.A.}$\null^{1}$\\
{\it University of Maastricht, Maastricht, The Netherlands}$\null^{2}$\\
{\it University of Southern Denmark, Odense, Denmark}$\null^{3}$\\
{\it University of Pittsburgh, Pittsburgh, PA, U.S.A.}$\null^{4}$\\
{\it National Institute on Aging, Bethesda, MD, U.S.A.}$\null^{5}$\\
}
\fi
\end{center}
\centerline{\today}
\begin{abstract}
We introduce  statistical methods for predicting  the types of human activity at  sub-second resolution using  triaxial accelerometry data. The major innovation is that we use labeled activity data from some  subjects to predict the activity labels of other subjects. To achieve this, we normalize the data across subjects by matching the standing up and lying down portions  of triaxial accelerometry data. This is necessary to account for differences between the variability in the position of the device relative to gravity, which are induced by body shape and size as well as by the ambiguous definition of device placement. We also normalize the data at the device level to ensure that the magnitude of the signal at rest is similar across devices.  After normalization  we use overlapping movelets (segments of triaxial accelerometry time series)  extracted from some of the subjects to predict the movement type of the other subjects. The problem was motivated by and is applied to a laboratory study of 20 older participants who performed different activities while wearing accelerometers at the hip.
Prediction results based on other people's labeled dictionaries of activity performed almost as well as those obtained using their own labeled dictionaries. These findings indicate that prediction of activity types for data collected during natural activities of daily living may actually be possible.\\{\sc Keywords:} Accelerometer; Activity type; Movelets; Prediction.
\end{abstract}

\section{Introduction}\label{sec::intro}
Body-worn accelerometers provide objective and detailed measurements of physical activity and  have been widely used in observational studies and clinical trials (\citealt{Atienza:05, Boyle:06, Bussmann:01,  Choi:11, Grant:08, Kozey:11, Schrack:13, Sirard:05, Troiano:08}). However, it is challenging to transform the accelerometry data into quantifiable and interpretable  information such as activity intensity or energy expenditure (\citealt{Bai:14, Schrack:14, Staudenmayer:12, Troiano:08,Trost:05, Welk:00}). 
An important goal of these studies is to transform an observed accelerometry dataset into a series of activity types that is time-stamped.  In this paper we are concerned  with predicting activity types at  sub-second resolution using detailed triaxial accelerometry information.
Sub-second labels seems to be the highest resolution that matters in terms of human activity recognition. Indeed, most human movements occur between 0.3 and 3.4 Hz (\citealt{Sun:93}).
Moreover, the resolution is necessary as we are interested in capturing short movements such as walking 2 or 3 steps, which is a highly prevalent type of activity in real life and likely to become a bigger component of activity as people age.
Such labeled time series data could then be used for health association studies, where decreases in activity diversity or changes in the circadian rhythm of activities may represent early strong indicators of biological processes or diseases. These expectations have strong face validity, as, for example, 1) an early indication of health recovery after surgery is the will and ability of a patient to use the bathroom; 2) disease may be associated with early reduction or abandonment of non-essential activities; and 3) death is associated with exactly zero activities.


\subsection{Accelerometry Data}\label{sec::intro::raw}

An accelerometer is a device that measures acceleration. When attached to the body of a human subject, if the subject is at rest, the accelerometer measures the subject's orientation relative to the gravitational vector; if the subject is moving, the accelerometer measures  a combination of the subject's orientation and acceleration.  Recent technology advances have produced  small and light accelerometers  that could collect data at high sample rates. For example, the Actigraph GT3+ device is of size $4.6\times 3.3\times 1.5 cm$,  weighs only 19 grams, and could sample data at 100 Hz; see Figure~\ref{orientation} for pictures of this device. Thus,  accelerometers can be easily attached to the human body and used for objectively recording detailed accelerations due to human physical activities.

\begin{figure}[htp]
\centering
\includegraphics[width = 4in,  angle=0]{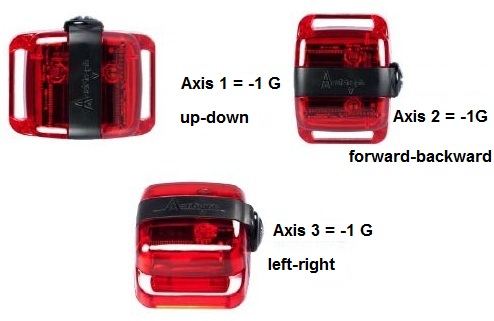}
\caption{\label{orientation} An Actigraph GT3+ accelerometer and its standard orientation. The top left, right and bottom graphs show the acceleration due to Earth's gravity when the corresponding axis aligns up with the opposite direction of  Earth's gravity. When the up-down axis is  the $x$-axis, the coordinate system is right-hand oriented. }
\end{figure}

The output from triaxial accelerometers, such as the Actigraph GT3+ device, is a triaxial time series of accelerations along three mutually orthogonal axes and expressed in units of Earth gravity, i.e., $g = 9.81m/s^2$. The three axes are labeled as ``up-down", ``forward-backward" and ``left-right" according to a device-specific reference system; see Section~\ref{sec::sub::raw} for more details. While the axes have these labels they have meaning only in the reference system of the device, as the device will move with the part of the body it is attached to. This means that an axis that is up-down relative to the device can easily be forward-backward, left-right, down-up or anything in between in the body-reference system. Figure~\ref{data} shows five segments of data for two subjects wearing the  devices at hip. From top to bottom, subjects perform 5 replicates of standing up from a chair and sitting back (chairStands), walk 20 meters at normal speed (normalWalk), mimic vacuuming,  stand still, and lie down on the back. From the data we observe that: 1) variability for active periods (normalWalk, chairStand and vacuum) is higher than inactive periods (standing and lying); 2) within each subject, the ordering and relative position of the three axes are different for standing and lying as  the orientation of the accelerometer with respect to the gravity differs for the two postures. Observations indicate that accelerometers are capable of detecting and differentiating various human activities. Moreover, we see that for chairStand and normalWalk, data for both subjects exhibit rhythmic patterns, suggesting that within the subject movements for same activities appear similar in the accelerometry data.

\subsection{Motivating Data}
The motivating data were collected from 20 older adults who were originally enrolled in the Study of Energy and Aging (SEA) pilot study. These participants were invited for an ancillary study for validating hip and wrist accelerometry and were instructed  in a research clinic to perform 15 different types of activities  according to a protocol. 
 Table~\ref{Table_types} provides the labels, detailed description and durations for the 15 activities.  The selection and design of these activities are intended to simulate a free-living context. The activity types are referred to by their labels in the paper. Throughout the study, each participant wore three Actigraph GT3X+ devices simultaneously, which were worn at the right hip, right wrist and left wrist, respectively.  The data were collected at a sampling frequency of 80Hz. Based on the protocol and the start/end times for each activity, a time series of labels of activity types is constructed to annotate the accelerometry data. In this paper we will focus on the data collected from accelerometers located at the hip and study how well a given program of activities can be distinguished by the accelerometry data at the population level.

\begin{table}
\small
    \centering
     \def\~{\hphantom{0}}
     \begin{minipage}{165mm}
      \caption{ \label{Table_types}15 activity types: labels, detailed description and durations}
    \begin{tabular}{p{2.5cm} |p{2 cm} p{10cm} p{2.5cm} }
 \hline
    Groups & Labels & Description & Duration \\
  \hline
\multirow{2}{1.5cm}{Resting}	&lying	&lay still face-up on a flat surface with arms at sides and legs extended	&10 mins\\
	&standing	&stand still with arms hanging at sides	&3 mins\\
    \hline 
\multirow{8}{2.5cm}{Upper body (while standing)}	& washDish	& fetch wet plates from a drying rack, dry them using a trying towel, and stack adjacent to the drying rack one-by-one	 &3 mins\\
	&knead	&knead a ball of playdough as if for cooking/baking	&3 mins\\
	&dressing	&unfold lab jacket, put jacket on (no buttoning), then remove, place the jacket on a hanger, and put the hanger on a nearby hook	 &3 mins\\
	&foldTowel	&fold towels and stack them nearby	&3 mins\\
	&vacuum	&vacuum a specified area of the carpet	&3 mins \\
	&shop	&walk along a long shelf, remove labeled items from the upper shelf about chest height, and place them on the lower shelf about waist height	&3 mins\\
\hline
\multirow{3}{2.5cm}{Upper body (while sitting)} 	&write &	write a specified sentence on one page of the notebook, then turn to the next page and repeat	&3 mins\\
	&dealCards	&hold a full deck, and deal cards one-by-one to six positions around a table	&3 mins\\
\hline
\multirow{7}{3.5cm}{Lower body} &	chairStand	&starting in a sitting position, rise to a normal standing position, then sit back down	&5 cycles\\
	&normalWalk&starting from standing still, walk 20 meters at a comfortable pace	 & 20 meters\\
	&normalWalk NoSwing	&starting from standing still, walk 20 meters at a comfortable pace with arms folded in front of chest	& 20 meters\\
	&fastWalk 	&starting from standing still, walk 20 meters at the fastest pace	& 20 meters\\
	&fastWalk NoSwing	&starting from standing still, walk 20 meters at the fastest pace with arms folded in front of chest	& 20 meters\\
    \hline
        \end{tabular}
             \end{minipage}
\vspace*{6pt}
\end{table}

We revisit Figure~\ref{data}, which displays the raw accelerometry data obtained from the hip accelerometers. We focus on the data for chairStand and normalWalk,  which exhibit rhythmic patterns.  An important observation is that these repetitive movements look very similar within the same person, though not across persons; this is a crucial observation as most prediction techniques are fundamentally based on similarities between signals. 
For example, for chairStand,  sudden large changes in acceleration magnitudes can be observed in the left-right axis for subject 13 but not for subject 4. Another example is that for normalWalk accelerations along the up-down and forward-back axes align up very well for subject 4 but are far apart for subject 13. These dissimilarities  seem to suggest that the accelerometry data are not comparable across subjects. Simply throwing prediction techniques at such a problem, irrespective to how sophisticated or cleverly designed they are, would achieve little in terms of understanding the data structure or solving the original problem. However,  we will show that a substantial amount of these observed dissimilarities is due to the orientation inconsistency of the reference systems across subjects and can be significantly reduced by using the same orientation, i.e., a common reference system. If a common reference system were used, then the three axes for standing and lying in Figure~\ref{data} would be very similar for the two subjects. This is clearly not the case for the raw accelerometry data as the left-right and the up-down axes for lying overlap for subject 13 but are quite different for subject 4.

\begin{figure}[htp]
\centering
\includegraphics[width = 6in, angle=0]{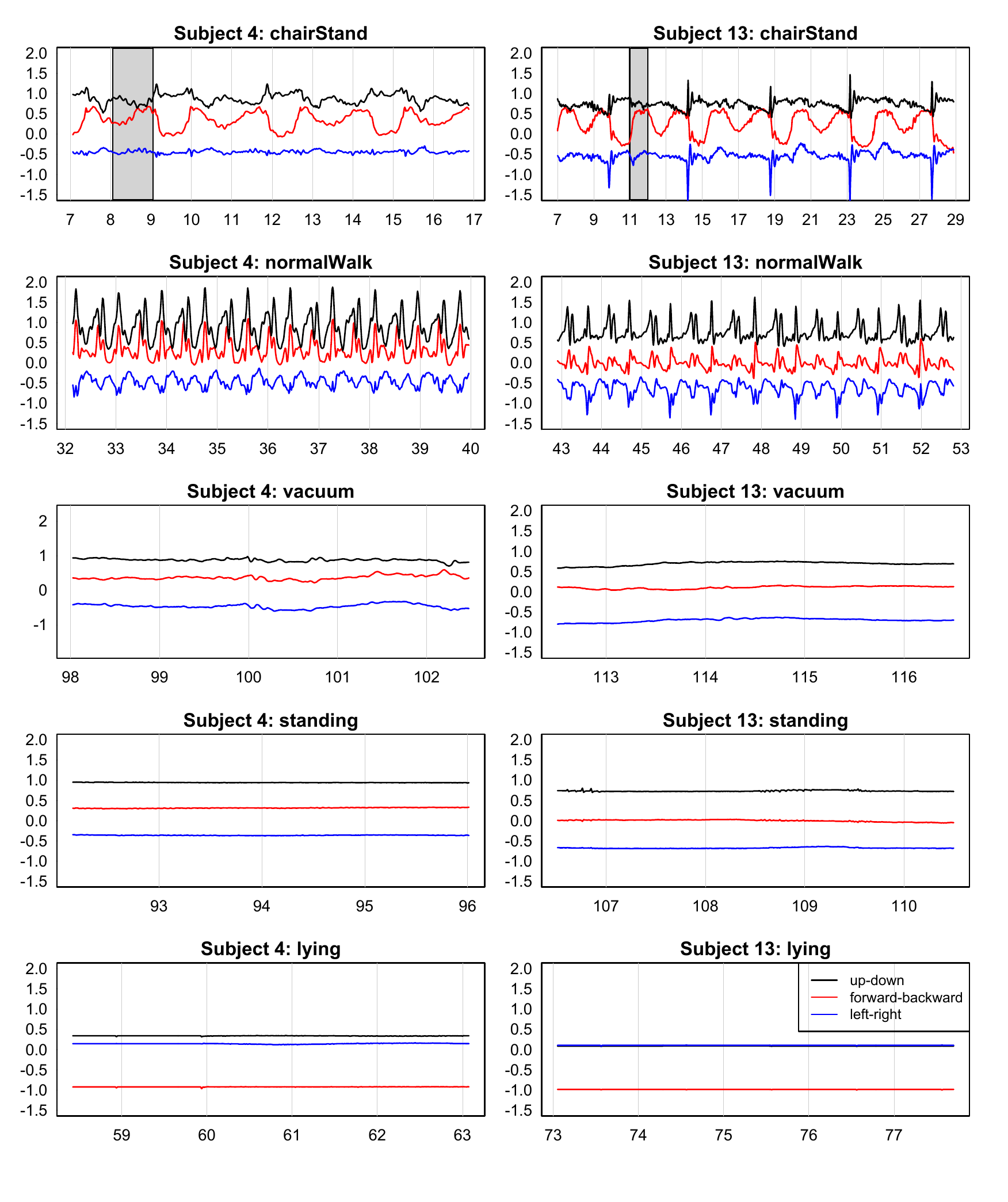}
\caption{\label{data} Raw data of chairStand, normalWalk, vacuum, standing and lying from hip-worn accelerometers for two subjects. The $x$-axis denotes recording time in seconds; the $y$-axis denotes the signal expressed in $g$ units.  The legend in the bottom right plot applies to all plots. The shaded areas contain two
$1s$ movelets. 
Note that orientation and placement of the device may change in the reference system designed around Earth's gravity.}
\end{figure}

\subsection{Proposed Methods}
In this paper, we  first address the problem that the raw accelerometry data collected from different subjects may not  be directly comparable. We show that the raw data are measured with respect to different reference systems and thus have different meanings across subjects. We will provide  a data transformation approach for normalizing the data, which is designed to mitigate these inherent problems in data collection.

Once data are normalized we proceed to predict activities using some of the subjects for training and the remaining subjects for testing the prediction algorithm. In particular, we will use movelets, a dictionary learning based approach that extends the methodology in \cite{Bai:12} designed for same-subject prediction. Here we describe briefly what  movelets are and provide the intuition behind the approach. A movelet is the collection of the three acceleration time series in a window of given size (e.g., $1s$). For example,  the time series in the two shaded areas in Figure~\ref{data} represent two $1$-$s$ movelets. The sets of overlapping movelets constructed from the accelerometry 
data with annotated labels are organized by activity types, which play the role of accelerometry ``dictionaries" for different activity types. As was illustrated
in Section~\ref{sec::intro::raw}, the intuition is that movements, and the associated movelets,  are similar for same activities and different for different activities. Based on this intuition, predictions of activity type based on accelerometry data without annotated labels can be obtained by identifying the annotated movelet that is most similar to the data; the similarity is quantified by the $L^2$ distance.  An important problem with the
movelets approach is choosing the window size for the movelets. We will introduce a criterion based on prediction performance, evaluate the criterion in our data, and provide specific recommendations for the optimal size of the movelets. This gives us a rigorous and  data-based approach that provides the necessary context for the currently used window size.  

\subsection{Existing Literature} 
A number of methods have been used for recognition of activity types, including linear/quadratic discriminant analysis (\citealt{Pober:06}), hidden Markov models (\citealt{Krause:03}), artificial neural networks (see, e.g., \citealt{Staudenmayer:09, Trost:12}), support vector machines (see, e.g., \citealt{Mannini:13}) and combined methods (see, e.g., \citealt{Zhang:12}). \cite{Bao:04} and \cite{Preece:09} reviewed and evaluated methods used in classifying normal movements. The major limitations of
these methods include: 1) they usually require at least a 1-minute window to conduct feature extraction; 2) they do not capture finer movements that last less than 1 minute, such as falling or standing up from a chair; and 3) the prediction process is usually hard to understand and interpret. In contrast, our proposed method is fully transparent, easy to understand,  requires minimal training data, and is designed both for periodic and non-periodic movements.

The rest of the paper is organized as follows. In Section \ref{sec::preprocess}
 we consider two main factors for making the raw accelerometry data not comparable
 and propose a  transformation method for normalizing the data.
 In Section \ref{sec::method}, we describe in detail the movelet-based prediction methods.
 In Section \ref{sec::result} we apply our prediction method to some real data. We
 conclude the paper with a discussion about the feasibility of movelet-based movement prediction
 and its potential relevance to public health applications.

\section{Triaxial Accelerometry Data Normalization}\label{sec::preprocess}
\subsection{Raw Data Are Not Comparable Across Subjects}\label{sec::sub::raw}
A fundamental problem with the raw triaxial accelerometry data is that they may not be directly comparable across subjects; Figure \ref{data} provides the intuition about this problem and indicates that orientation inconsistency  is an important factor. Another factor is device-specific systematic bias, which we will explain in details later.

More specifically, triaxial accelerometers measure accelerations along the three axes in the reference system of the device. Indeed, Figure \ref{orientation} indicates exactly the device-specific reference system. In its standard orientation and in absence of movement, the up-down axis of the device is aligned with Earth's gravity and will register $-1$g (acceleration towards the center of the earth) and 0g along the other two axes (orthonormal to Earth's gravity).  If the device is rotated clockwise by 90 degrees towards the forward-backward (left-right) axis then in absence of movement the forward-backward  acceleration will  continuously record $-1$g, while the acceleration in the other two directions will be 0g. This shows that: 1)  orientation of the device will fundamentally alter the local mean of the acceleration; 2) the relative size of the acceleration along the axes is a proxy for orientation of the device relative to Earth's gravity; and 3) simple laws of physics may be applied to make the signals comparable across individuals by using a common reference system. Our paper is dedicated to showing that using a common reference system is a crucial step for across-individual prediction of movement type, and providing a simple way to predict across individuals using the dictionary of movement of others (movelets).  In some applications, accelerometers are mounted on a rigid body to maintain the standard orientation. However,  when worn by human subjects, the orientations of accelerometers  could depend on many factors. First, the device can be flipped,  turned or attached to a slightly different part of the human body. Another factor is the shape of the  human body that the device is attached to; for example, waist circumference or relative height of the hip with respect to the body can fundamentally alter what we generally describe as ``position".  To illustrate the problem, we have extracted two segments of data ($3 s$ for each segment) for standing and lying for each of the 20 subjects in our study and computed the means of three axes for each segment; see the left panel of Figure~\ref{centers_o}. If the orientation of the accelerometers respected our intuition of directions then the means should be $(-1,0,0)$ for standing and $(0,-1,0)$ for lying. Here we use a length-3 vector  to denote the acceleration in the order of  up-down, forward-backward, and left-right axes. The left panel of Figure~\ref{centers_o} indicates that the orientations of the accelerometers are rarely they are expected to be  and  vary considerably across subjects. It is interesting to see that almost all accelerometers were flipped in the up-down direction because for those accelerometers the means of data for standing are $(1,0,0)$; this probably indicates that even something as simple as up-down relative to the device can be easily misinterpreted.  
An example of device placement that is inconsistent with the protocol is  subject 10, for whom the mean of data for standing is $(0,0,1)$. The means for subjects 13 and 15 show strong tilt/rotation effects. These  simple exploratory tools indicate the need of rotating the data to ensure that they have about the same interpretation across subjects; we believe that the lack of appreciation of this fact has been the major obstacle to current attempts to predict movement types across individuals.

\begin{figure}[htp]
\begin{center}$
\begin{array}{cc}
\includegraphics[width = 3in, angle=0]{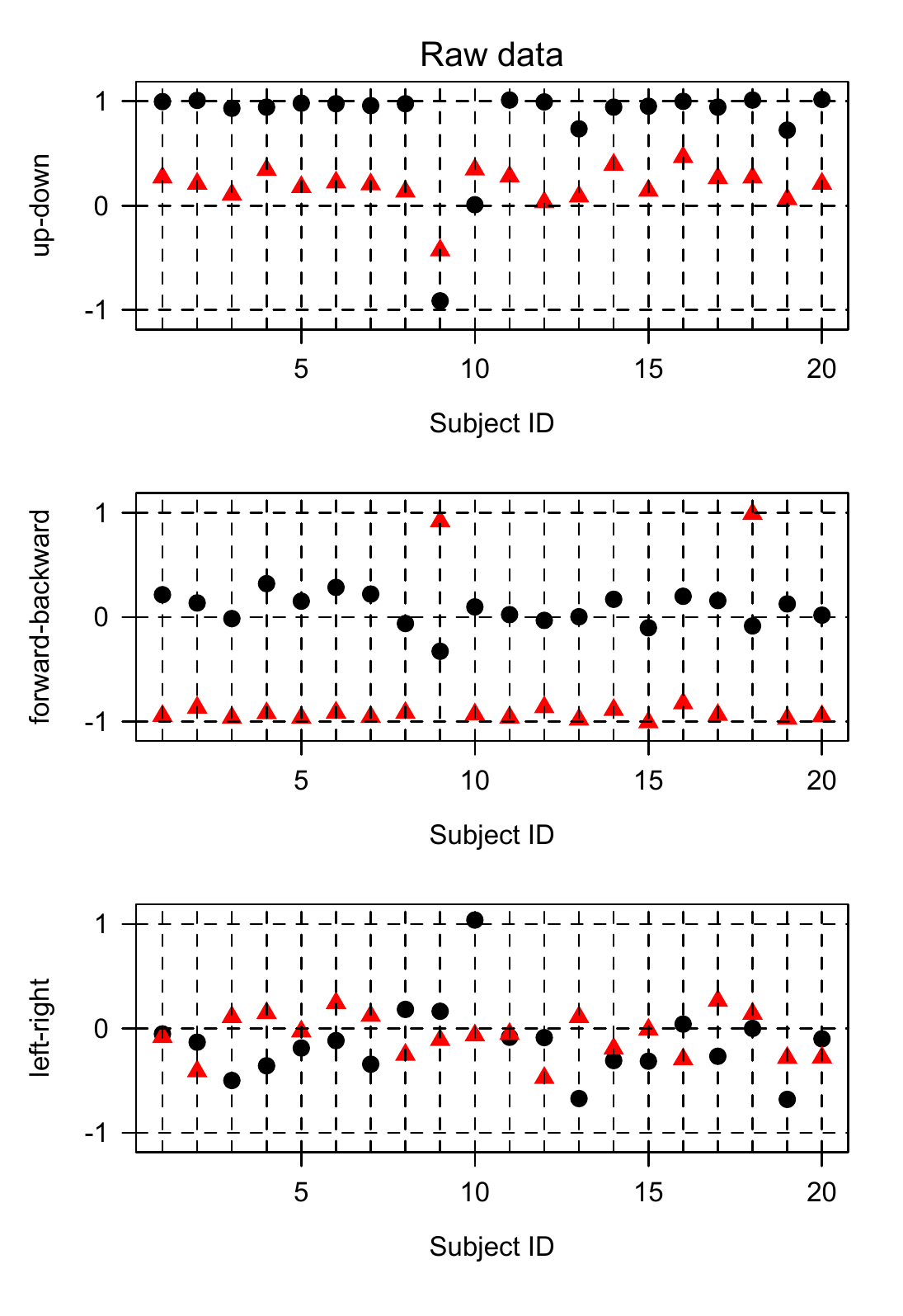}&\includegraphics[width = 3in, angle=0]{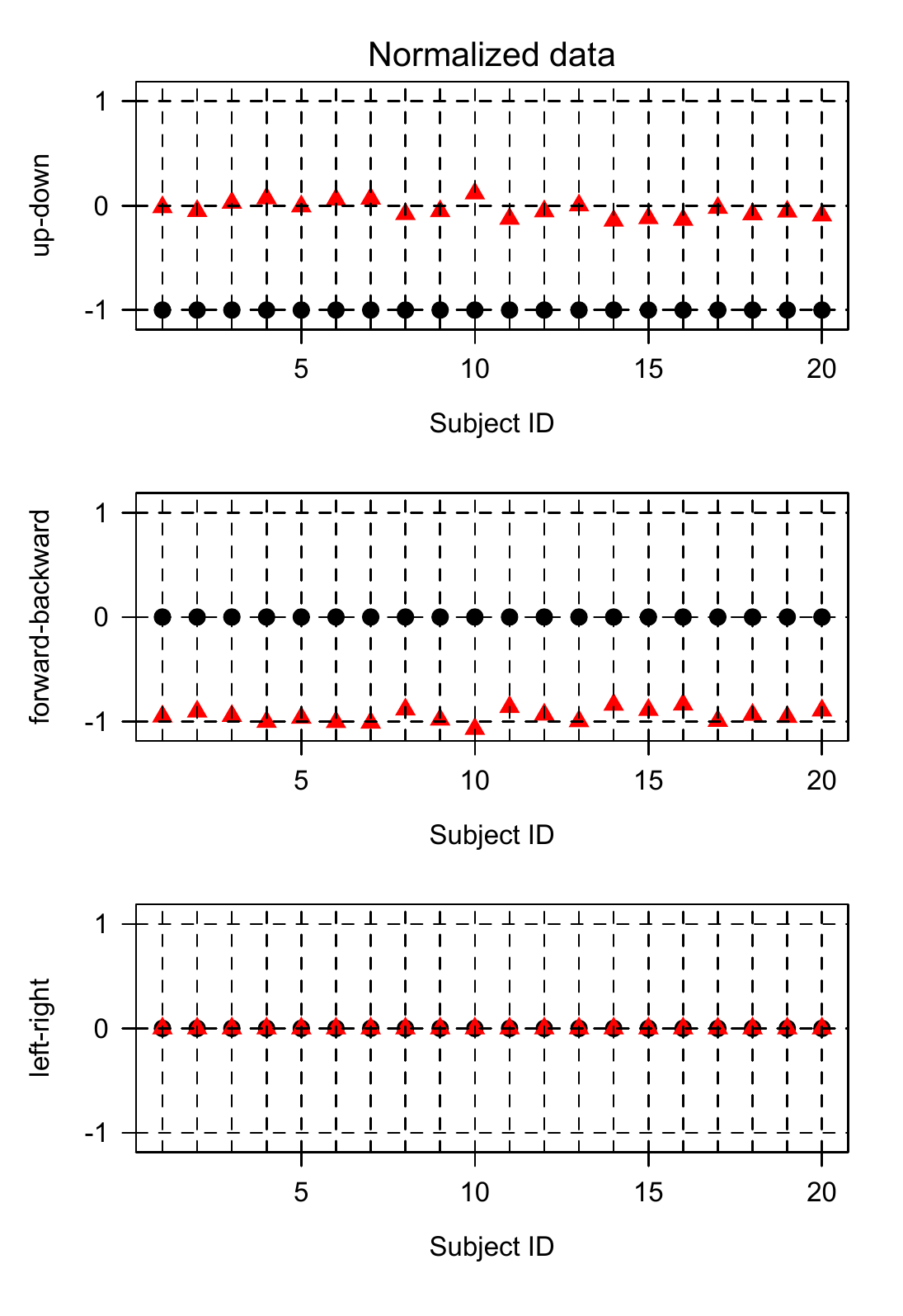}
\end{array}$
\end{center}
\caption{\label{centers_o} Means of standing and lying using the raw data (left panel) and the normalized data (right panel). Filled circles are for standing while filled triangles are for lying. The $x$-axis denotes subject ID; the $y$-axis is $g$ units. The top panel is for the up-down axis, the middle one is for the forward-backward axis, and the bottom one is for the left-right axis. This figure appears in color in the electronic version of this article. }
\end{figure}

Another problem is that  the accelerometry data  have small but non-ignorable systematic biases; more precisely, the magnitude of the acceleration vector differs when it should not, that is, when subjects stand still.  To separate the bias  from the orientation inconsistency, we have computed the magnitude of the mean acceleration vector, which is $\sqrt{x_1^2 + x_2^2 + x_3^2}$ if the mean acceleration vector is denoted by $(x_1, x_2, x_3)^T$. Note that the magnitude of the mean acceleration vector is invariant to rotation. Figure~\ref{magnitude} plots the magnitude of the mean acceleration vector for standing, which should be $1g$. An inspection of the graph indicates that the magnitude is rarely equal to 1 and  it differs from 1 by as much as 6\% for two subjects. We have found that these differences can have serious consequences for activity type prediction, because a change of this magnitude can fundamentally affect the geometry of the activity space. Using a simple model where we assume that subjects have random moves while  standing still (see Appendix A) we could show that the gravity-inflation (note that magnitudes are typically larger than 1) is most likely not due to random movements for most devices. Thus,  for the purpose of our research we will treat these differences as systematic biases that are associated with the devices. A more in-depth and principled analysis of this assumption is beyond the scope of the current paper.

\begin{figure}[h]
\centering
\includegraphics[width = 4in, angle=0]{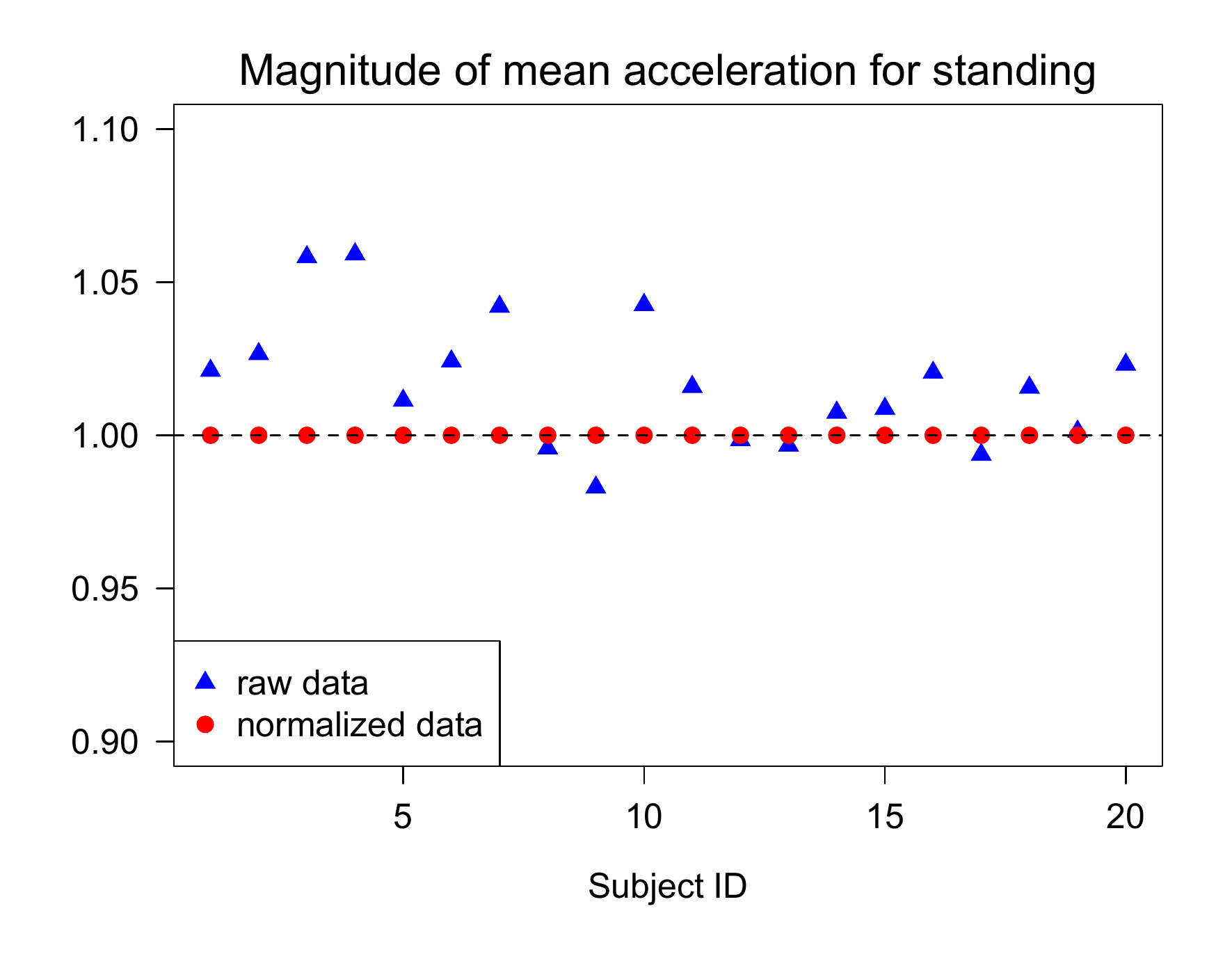}
\caption{\label{magnitude} Magnitude of the mean acceleration vector during standing (expressed in $g$ units) using the raw and normalized data. Blue triangles are for raw data while red circles are for normalized data.}
\end{figure}

\subsection{Normalization of Raw Triaxial Data}\label{sec::sub::normalization}
The purpose of normalization is to make  same-activity data  more comparable across subjects so that we can use the dictionary of movements of one or several subjects (movelets) to predict the activity of others. 
To achieve this, a desired normalization procedure should be able to correct the orientation inconsistency and  reduce the bias  in the raw data.

The proposed procedure is to apply a particular linear transformation to the raw data. This consists of two steps: rotation and translation. We first rotate the triaxial data so that all the data are in the standard orientation (reference system) and then translate the rotated data to reduce  systematic biases. To be precise, assume $\mathbf{u} = (u_1, u_2, u_3)^T$ is  a single data point in the raw data space, $\mathbf{R}$ is a rotation matrix, and $\mathbf{b}= (b_1,b_2,b_3)^T$ is the vector of systematic bias. Then we have
\begin{equation}
\label{transform}
\mathbf{x} = \mathbf{R}\mathbf{u} - \mathbf{b}.
\end{equation}
In practice, accelerometers might be moving  from time to time relative to the  human body, which could make their orientation  time-dependent. We assume that the accelerometers do not move with respect to the human body and apply the same transformation in~(\ref{transform}) to all  raw data within the same subject. This simple approach has worked well in practice.

We need to determine $\mathbf{R}$ and $\mathbf{b}$ based on the subject-specific raw data; both $\mathbf{R}$ and $\mathbf{b}$  depend on the subject, but the notation was dropped for presentation simplicity. 
We extracted two small segments from the raw accelerometry data, one segment for standing and  another for lying; the segments can be very short, say 2-3 seconds. 
The approach then proceeds by calculating the means of the three axe  for both activities, which results in two three dimensional vectors $\mathbf{a}_1$ for standing and $\mathbf{a}_2$ for lying. If the up-down axis for the orientation of the device aligns  well with the negative direction of  Earth's gravity and the data  have no systematic bias, then $\mathbf{a}_1$ should be close to $-\mathbf{e}_1 = (-1,0,0)^T$ and  $\mathbf{a}_2$ should be close to  $-\mathbf{e}_2 = (0,-1,0)^T$. Hence,  we select $\mathbf{R}$ from the class of rotation matrices that minimizes $\|\mathbf{R}\mathbf{a}_1 +\mathbf{e}_1\|_2^2 + \|\mathbf{R}\mathbf{a}_2 + \mathbf{e}_2\|_2^2$ and satisfies $\mathbf{e}_3^T\mathbf{R} (\mathbf{a}_1\times \mathbf{a}_2)>0$, where $\mathbf{e}_3 = (0,0,1)^T$ and $\mathbf{a}_1\times \mathbf{a}_2$ is the cross product of $\mathbf{a}_1$ and $\mathbf{a}_2$. The latter condition  ensures that we  have a right-hand coordinate system for the rotated data. It can be shown that $\mathbf{R}$ is uniquely determined and can be computed, as shown in  Web Appendix C. Then we let $\mathbf{b} = \mathbf{R}\mathbf{a}_1+ \mathbf{e}_1$,  which by~(\ref{transform}) implies that the mean of standing accelerometry is exactly $-\mathbf{e}_1$. 

We note that estimation of the rotation matrix $\mathbf{R}$ might be affected by the existence of  systematic bias. For our accelerometry data, the systematic bias is small (see Figure~\ref{magnitude}) and seems to have negligible effect on rotation. However, if one is concerned about this issue, then we suggest a visual examination of the raw and normalized data, which might reveal if  normalization provides reasonable results.

\subsection{Normalized Data}
We applied the normalization method to the accelerometry data and re-calculated the means of standing and lying for the 20 subjects. As expected,  the right panel of Figure~\ref{centers_o} shows that the means of standing (lying) are close to $-\mathbf{e}_1$ ($-\mathbf{e}_2$) for all subjects. Comparing the two panels in Figure~\ref{centers_o} we see strong indications  that  the normalized data are  more comparable across subjects. The fact that means are closer to what they are designed to be after normalization is satisfying, though not surprising. To show the dramatic effects of normalization we investigate the changes from the raw to normalized data in activities that were not used for normalization. The top panels of Figure~\ref{compare_raw_transform} displays the raw data for normal walking for 2 subjects. A close inspection of the two graphs indicate periodic movements, though the patterns and the size of the means of movement make it very hard for any method developed for one of the subjects to recognize the patterns of the other subject. After normalization the patterns are much more comparable (see bottom panels in Figure~\ref{compare_raw_transform}), which will allow powerful techniques, such as movelets, to be generalized across individuals. It is interesting that while the signal is visually similar, the amplitude of the up-down axis (the black solid line in the bottom panels in Figure~\ref{compare_raw_transform}) differs quite substantially between the two subjects. This is likely due to the stronger up-down acceleration of subject 4 compared to subject 13. 

\begin{figure}[htp]
\centering
\includegraphics[width = 6in, angle=0]{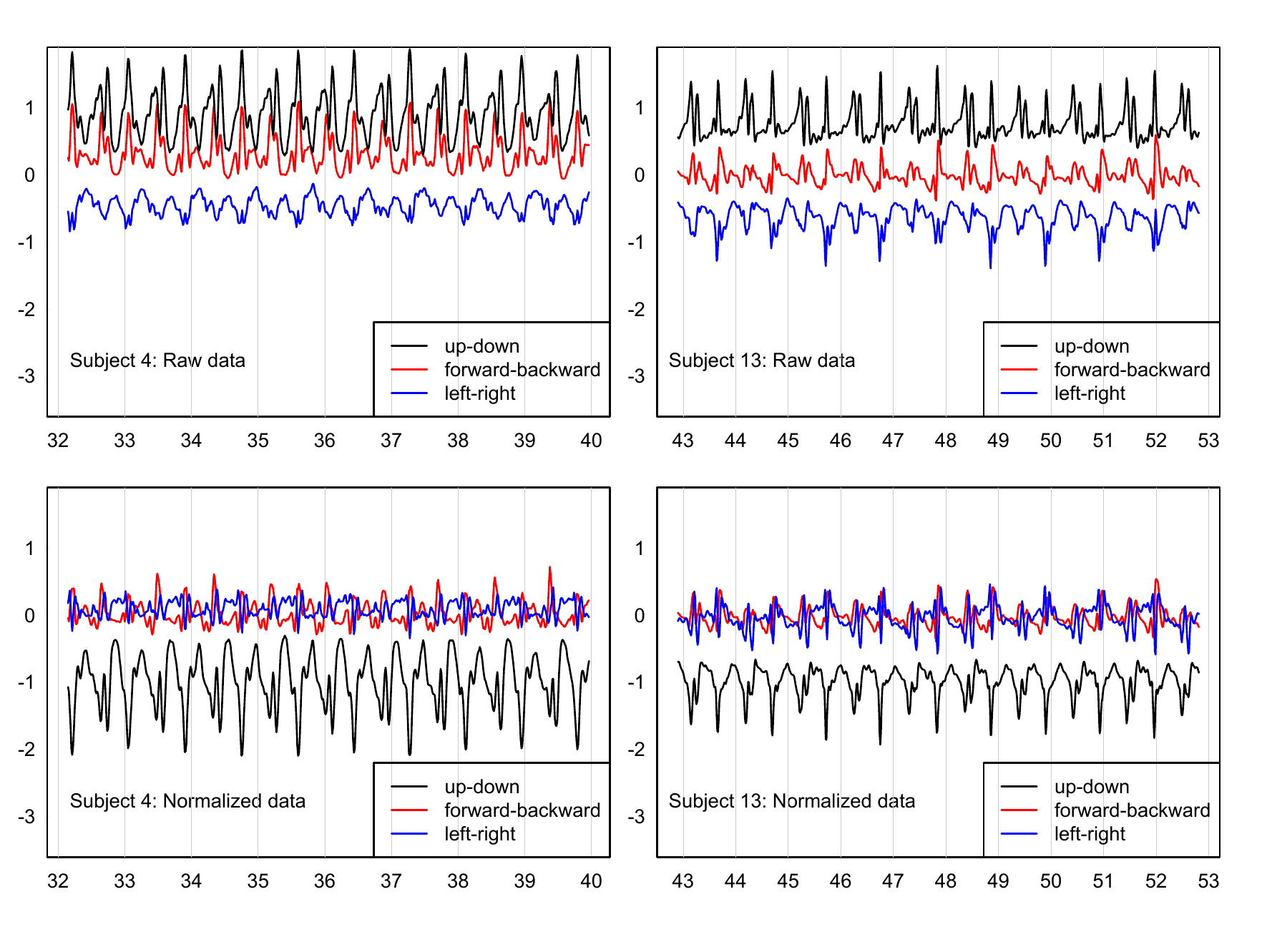}
\caption{\label{compare_raw_transform} Raw  and normalized data from two subjects for normalWalk. }
\end{figure}

\section{Movelets Prediction}\label{sec::method}
In this section, we first describe the subject-level movelets prediction method developed by \cite{Bai:12}  and introduce some notation. Then we propose a population-level movelets prediction method that could predict a subject's activity labels with no knowledge about how this subject's accelerometry data look  when doing various activities. Finally, we provide a simple automatic approach for tuning the window size/length of movelets, an important component in the movelets prediction procedure.

\subsection{Subject-level Movelets Prediction}
We denote the normalized triaxial accelerometry data by $\mathcal{X}_i(t) = \{X_{i1}(t), X_{i2}(t), X_{i3}(t)\}$ where $t\in \mathbb{T}_i$ and $i\in \mathbb{I}$. Here $\mathbb{T}_i$ is a time domain on the scale of seconds and $\mathbb{I}$ denotes the collection of subjects. Associated with the data is a label function $L_i(t)$ which takes values in $\mathcal{A}= \{Act_1,\dots, Act_A\}$, a collection of labels  each denoting a distinct activity. If  subject $i$ is standing at time $t$, $L_i(t)$ will be the label in $\mathcal{A}$ that represents standing. Let $\mathbb{U}_i$ and $\mathbb{W}_i$ be a disjoint union of $\mathbb{T}_i$. Then the subject-level prediction is to  predict the labels $\{L_i(t): t\in \mathbb{W}_i\}$ for the data $\{\mathcal{X}_i(t): t\in \mathbb{W}_i\}$, assuming $\{\mathcal{X}_i(t): t\in\mathbb{U}_i\}$ is the training data with $\{L_i(t): t\in \mathbb{U}_i\}$ being known. For the subject-level prediction, data and label information across subjects are not used.

The idea of movelets prediction (\citealt{Bai:12, He:14}) is to first decompose the accelerometry data into a collection of overlapping movelets. A movelet  of length $h$ at time $t$ is defined as $M_i(t,h) = \{\mathcal{X}_i(s): s\in [t, t+h)\}$ and it captures the acceleration patterns in the time interval $[t, t+h)$. For simplicity we drop the parameter $h$ from $M_i(t,h)$ hereafter. The accelerometry data can be decomposed into a continuous sequence of overlapping movelets with $\mathcal{X}_i(t)$ being contained in all the movelets starting at $s \in (t-h, t]$.  The approach of overlapping movelets, as a type of  sliding window technique, is advantageous  to other windowing techniques such as event-defined windows or activity-defined windows (\citealt{Preece:09}). The latter windowing techniques require either locating specific events or determining the times at which the activity changes; but the use of overlapping movelets requires no such ``preprocessing" and is better for real-life applications. Another advantage of overlapping movelets is that they are computationally easy and simple to construct, which makes them well suited for analyzing ultra-dense acceleration data.

Using movelets, the training data can be represented as $\{M_i(t): t\in \bar{\mathbb{U}}_i\}$ and similarly for the unlabeled data, $\{M_i(t): t\in \bar{\mathbb{W}}_i\}$. Here $\bar{\mathbb{U}}_i = \{t\in \mathbb{U}_i: [t, t+h)\subset \mathbb{U}_i\}$ and $\bar{\mathbb{W}}_i = \{t\in \mathbb{W}_i: [t, t+h)\subset \mathbb{W}_i\}$. Note that in the new data representation with movelets,   a data point will be lost if it is not contained in any  time interval of length at least $h$ in $\mathbb{U}_i$ or $\mathbb{W}_i$. To avoid ambiguity for prediction, the movelets in the training data that do not belong exclusively to a single type of activity are  deleted. For the movelets in the training data, we  define the label function $\L_i\{M_i(t)\} := L_i(t)$.

Given an unlabeled movelet $M_i(t), t \in \bar{\mathbb{W}}_i$, the method is to find the closest match in the training data, i.e., to search for  $t^{\ast} = t^{\ast}(t) \in \bar{\mathbb{U}}_i$  such that
\begin{equation}
\label{objective}
M_i(t^{\ast}):= \text{arg} \min_{s\in \bar{\mathbb{U}}_i} f\{M_i(t), M_i(s)\},
\end{equation}
where $f(\cdot, \cdot)$ is some function that measures  distance between  movelets; this type of 1-nearest neighbor works exceptionally well in conjunction with the sliding window movelets. The idea is that movelets with similar pattern or shape should belong to the same activity. A simple  $f$ that will be adopted in the paper is 
\begin{equation}
\label{f}
f\{M_i(t), M_i(s)\}: = h^{-1}\int_0^h\left \|\mathcal{X}_i(t+h^{\prime}) - \mathcal{X}_i(s+h^{\prime})\right\|^2 dh^{\prime}.
\end{equation}
 The above $L^2$ distance measures explicitly  the amplitude difference between two movelets; in addition, because movelets are always of the same length, the distance between movelets from activities of different frequencies can also be large. 


Since $M_i(t^{\ast})$ is the best match for $M_i(t)$, the label of the movelet at time $t$  can be predicted by $\PL_i (M_i(t)): = \L_i(M_i(t^{\ast})) = L_i(t^{\ast})$. We may use the predicted label for $M_i(t)$  as the prediction for $\mathcal{X}_i(t)$. However, this may not always be an accurate prediction because we use data as far as $h$ seconds away in making the prediction with the underlying assumption that the activity type remains the same in the time interval $[t, t+h)$. So if the movelet actually contains some transitional period between different activity types, this prediction can be wrong. Using the facts  that $\mathcal{X}_i(t)$ contributes to the prediction of all the movelets that  start at $s\in (t-h, t]$ and that human physical movements are in general continuous, the predicted label can be  the most frequently predicted label for all the movelets that contain $\mathcal{X}_i(t)$. Formally we let
\begin{equation*}
\label{prediction}
\begin{split}
PL_i(t):&=\text{arg} \max_{Act_j\in \mathcal{A}} h^{-1}\int^t_{t-h} 1_{\{ \PL_i(M_i(s)) = Act_j\}}ds\\
& = \text{arg} \max_{Act_j\in\mathcal{A}} h^{-1}\int^t_{t-h} 1_{\{ L_i(t^{\ast}(s))=Act_j\}} \mathrm{d} s,
\end{split}
\end{equation*}
where $1_{\{\cdot \}}$ is 1 if the statement inside the bracket is true and 0 otherwise.

\subsection{Population-level Movelets Prediction}
The movelets method described in the previous section  considers only prediction at the subject level as the training data and the  data to be predicted are from the same subject. We now extend the movelets prediction method to the population level. We assume $\mathbb{I}$, the collection of subjects, are divided into two disjoint sub-collections, $\mathbb{I}_0$ and $\mathbb{I}_1$. For subjects in $\mathbb{I}_0$ the activity labels are known while for these in $\mathbb{I}_1$ the labels are unknown. The problem is to predict the activity labels for subjects in $\mathbb{I}_1$ using the data from $\mathbb{I}_0$. 

Given an unlabeled movelet $M_i(t)$, $i\in \mathbb{I}_1$, the proposed method is to  search for $\{i^{\ast}, t^{\ast}(t)\}$ such that
\begin{equation}
\label{objective_pop}
M_{i^{\ast}}(t^{\ast}) := \text{arg}\min_{\left\{M_j(s):\  j\in \mathbb{I}_0, \ s\in \bar{\mathbb{T}}_j\right\}} f\{M_i(t), M_j(s)\},
\end{equation}
where $f$ is defined in~(\ref{f}) and $\bar{\mathbb{T}}_j = \{t\in \mathbb{T}_j: [t, t+h)\subset \mathbb{T}_j\}$. With small modifications, equation~(\ref{objective_pop}) reduces to~(\ref{objective}) when we consider  a single subject. The idea of the proposed method is that  we will be  able to label the  movelet accurately as long as it could match the movelet to a same-activity movelet of at least one subject in $\mathbb{I}_0$. This is the key for a successful  population-level prediction,  as movements from different subjects usually exhibit  different patterns and hence  movelets from different subjects have large within-activity variation. However, having a sizable group of subjects in the training set will cover multiple patterns in the normalized data, which will lead to improved prediction. In some sense, ours is the implementation of the intuition that ``many people move differently, but some people move like you". For example, for fast walking with arm swinging in the accelerometry data, the time for completing two steps ranges from about $0.375s$ to $0.625s$.  Hence, rather than requiring a subject's movement to be similar to the collection of movements of the same activity from all subjects in $\mathbb{I}_0$, our method only requires this subject's movement to be similar to at least one subject's movement of the same type. As long as there are people in the training dataset whose times for completing two steps are similar to the subject  we try to predict, prediction should work reasonably well.

Finally, the predicted label for $M_i(t)$ is  $\PL\{M_i(t)\}: = \L_{i^{\ast}}\{M_{i^{\ast}}(t^{\ast})\} = L_{i^{\ast}}(t^{\ast})$ and we let
\[
PL(i,t) := \text{arg} \max_{Act_j\in \mathcal{A}} h^{-1}\int^t_{t-h} 1_{\{ \PL(M_i(s))  = Act_j\}}\mathrm{d} s.
\]
\subsubsection{Evaluation of Prediction Results}\label{sec::evaluation}
We evaluate the performance of the proposed prediction method  by defining the following two quantities: 
 for activity type $Act_j\in\mathcal{A}$ and subject $i\in \mathbb{I}_1$, the subject-specific true prediction rate is
defined as
\begin{equation}
\label{apr}
r_{ij} :=\frac{\sum_{t\in\bar{\mathbb{T}}_i, L_i(t) = Act_j} 1_{\left\{PL(i,t) = Act_j\right\}}}{\sum_{t\in\bar{\mathbb{T}}_i, L_i(t) = Act_j}1},
\end{equation}
and the corresponding false prediction rate is 
$$
w_{ij} :=\frac{\sum_{t\in\bar{\mathbb{T}}_i,PL(i,t) = Act_j}1_{\left\{L_i(t)\neq Act_j\right\}}}{\sum_{t\in\bar{\mathbb{T}}_i,PL(i,t) = Act_j}1}.
$$
Note that $r_{ij}$ is the proportion of subject $i$'s data points in activity type $Act_j$ that are correctly identified as belong to $Act_j$, while $w_{ij}$ is the proportion of subject $i$'s data points that are identified as in activity type $Act_j$ but do not belong to $Act_j$.
A successful prediction method should have high $r_{ij}$ and low $w_{ij}$.
\subsubsection{Selection of Movelets Length}\label{h}
An important problem in  movelets prediction is the selection of $h$, the length of movelets. \cite{Bai:12} and \cite{He:14}  noted that $h$ should be selected such  that  movelets  contain sufficient information to identify a movement while avoiding redundant information; based on the guideline both papers used $h=1s$.  This choice is based  on the reasonable observation that human movement happens on the $0.5-2$Hz scale, though no objective criterion has been explored so far. Here we propose a simple automatic approach for selecting $h$ using only the training data. Our approach is based on leave-one-subject-out cross validation applied to the training data.

Consider subject $i\in \mathbb{I}_0$. For  activity type  $Act_j\in\mathcal{A}$ of subject $i$, we calculate its true prediction rate  defined in~(\ref{apr}) with subjects in $\mathbb{I}_0-\{i\}$ as training data; denote the accurate prediction rate by $r_{ij}^*$. Then the average prediction accuracy over all subjects and all activity types is 
\begin{equation*}
\bar r^* =  (I_0 A)^{-1}\sum_{i\in \mathbb{I}_0, Act_j \in \mathcal{A}}\mathit{r}_{ij}^*.
\end{equation*}
 As $\bar r^*$ depends on $h$,  we  propose to select 
\begin{equation}
\label{h}
h^*:= \arg \max_h \bar r^*.
\end{equation}

\section{Results of Movelet Prediction}\label{sec::result}

We now apply the movelets prediction method to the 20 participants of the accelerometry validation study. As described in the Introduction, data from hip, left  and right wrist-worn accelerometers were collected. The participants were instructed to perform 15 activities with some resting breaks (3 minutes each break) between  activities. Activities were chosen specifically because they were representative of activities that older adults may commonly do in their daily lives; please see Table \ref{Table_types} for specific details. The resting breaks were removed from the data and the transitional periods between consecutive activities were also removed. We focus on applying our method to data from hip-worn
accelerometers.

The raw data were first normalized using the  transformation proposed in Section~\ref{sec::preprocess}. The 20 participants  were then split into two groups: 10 for training  and 10  for testing. Because the accelerometry data for each subject contain dense triaxial time series, it will become computationally challenging if all of the training data are used for prediction.  For activities with explicit starting and ending, such as chairStand, two consecutive replicates from each subject were used as training data. For other activities, a segment of 5 seconds for each subject were used as training data. The shortness of the training periods is a hallmark of the movelets prediction approach. Indeed, movelets only need a quick look at some quality data to recognize the data in a complex system. 
\subsection{Activity Groups}
According to Table \ref{Table_types}, there are 8 types of upper body activities in the data. For prediction, we will group these activities into a new subgroup, ``upper body activities". There are two reasons for doing this. First, these  activities involve mainly movements from the upper body such as arms and hands and  are not well detected by accelerometers worn on the hip (\citealt{He:14}). Second, it was shown in \cite{He:14} that  upper body activities require a series of distinct sub-movements that can be similar across activities and difficult to differentiate even within the same subject. Differentiating these upper body activities across subjects  becomes even more challenging because of the increased heterogeneity of activities across subjects.

For the lower body activities, there are four types of walking activities: normal walking without arm swinging, normal walking with arm swinging, fast walking without arm swinging, and fast walking with arm swinging.  Arm swinging belongs to upper body activities and  is not  well detected by accelerometers worn on the hip (\citealt{He:14}).  Although distinguishing normal and fast walking can be done relatively easily when training data are available for the same subject (\citealt{He:14}),  it will be difficult  to do so across subjects. The reason is that there is a lot of heterogeneity in the subjects' walking speed and  one subject's fast walking speed may well be close to another subject's normal walking speed. Indeed, for the 10 subjects used as training data, the time for  two normal-walking steps ranges from $0.75 s$  to $1.25 s$ while it is between $0.375 s$ and $0.625 s$ for two fast-walking steps.  Thus,  a  new subgroup  of activities, ``walking", is created to  include all four walking types.

We now have 5 new activity types: ``standing", ``lying", ``chairStand", ``walking", and ``upper body activities". We will use the new activity labels for evaluating our population-level movelets prediction method.
\subsection{Selection of Movelets Length}\label{sec::sub::length}
We selected the movelet length $h=0.75 s$ by the criteria in~(\ref{h}); see Figure~\ref{apc_h} for cross-validated  mean  prediction accuracy for different choices of $h$. The high  cross-validated mean prediction accuracy (near $90\%$) indicates  that the population-level movelet method is capable of 
identifying activity types across subjects.
\subsection{Prediction Results}
We used the true prediction rate and false prediction rate described in Section~\ref{sec::evaluation} to evaluate the results. To illustrate the importance of  the preprocessing step, we also applied  the proposed method to the raw data and to the rotation-only data. As a comparison, the results by the within-subject movelets method applied to the 10 testing subjects are also shown. Note that for the within-subject movelets method, $2$-$3$ $s$ of labeled accelerometry data for each activity are used for predicting the activity type of the rest of the accelerometry data.
The top panel of Figure~\ref{fig_apr} shows box plots of  the true prediction rates for the 5 activity types and the bottom panel  shows box plots of  the false prediction rates. The two figures indicate that  the population-level  method for the normalized data performs similarly with the within-subject method, i.e., having high mean true prediction rates ($>90\%$) and low mean false prediction rates ($<20\%$). For the population-level method, there is larger variability in both true and false prediction rates across subjects, reflecting the increased uncertainty due to heterogeneity of same activities across subjects.  Moreover, the results demonstrate that a naive application of the population-level movelets method  to the raw data could lead to low true prediction rates and high false prediction rates. In particular, standing can not be identified across subjects. The role of translating the rotated data is also non-trivial (compare the results for  ``rotated" and ``normalized"). Indeed, the prediction performance is improved for chairStand and  for distinguishing other activities from chairStand.

To illustrate that the four types of walking are essentially indistinguishable across subjects for accelerometers worn on the hips, we plotted the prediction results for one subject's data for walking; see Figure~\ref{plots_walk}. The results show that this subject's fast walking with arm swinging 
is identified as fast walking without arm swinging and his or her fast walking without arm swinging is identified as normal walking without arm swinging.

\begin{figure}[htp]
\begin{center}$
\begin{array}{c}
\includegraphics[width=6in,  angle=0]{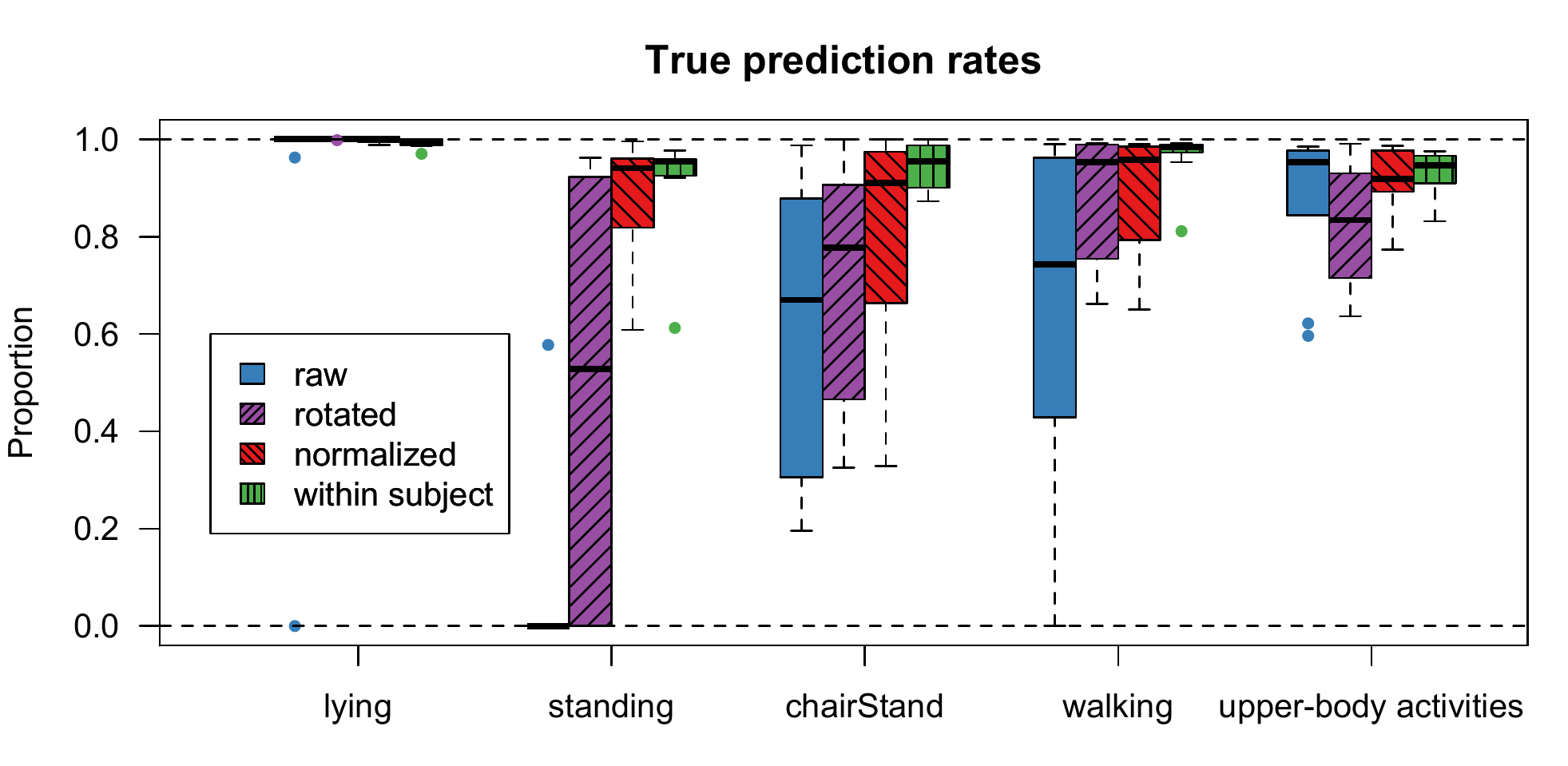}\\
\includegraphics[width=6in,  angle=0]{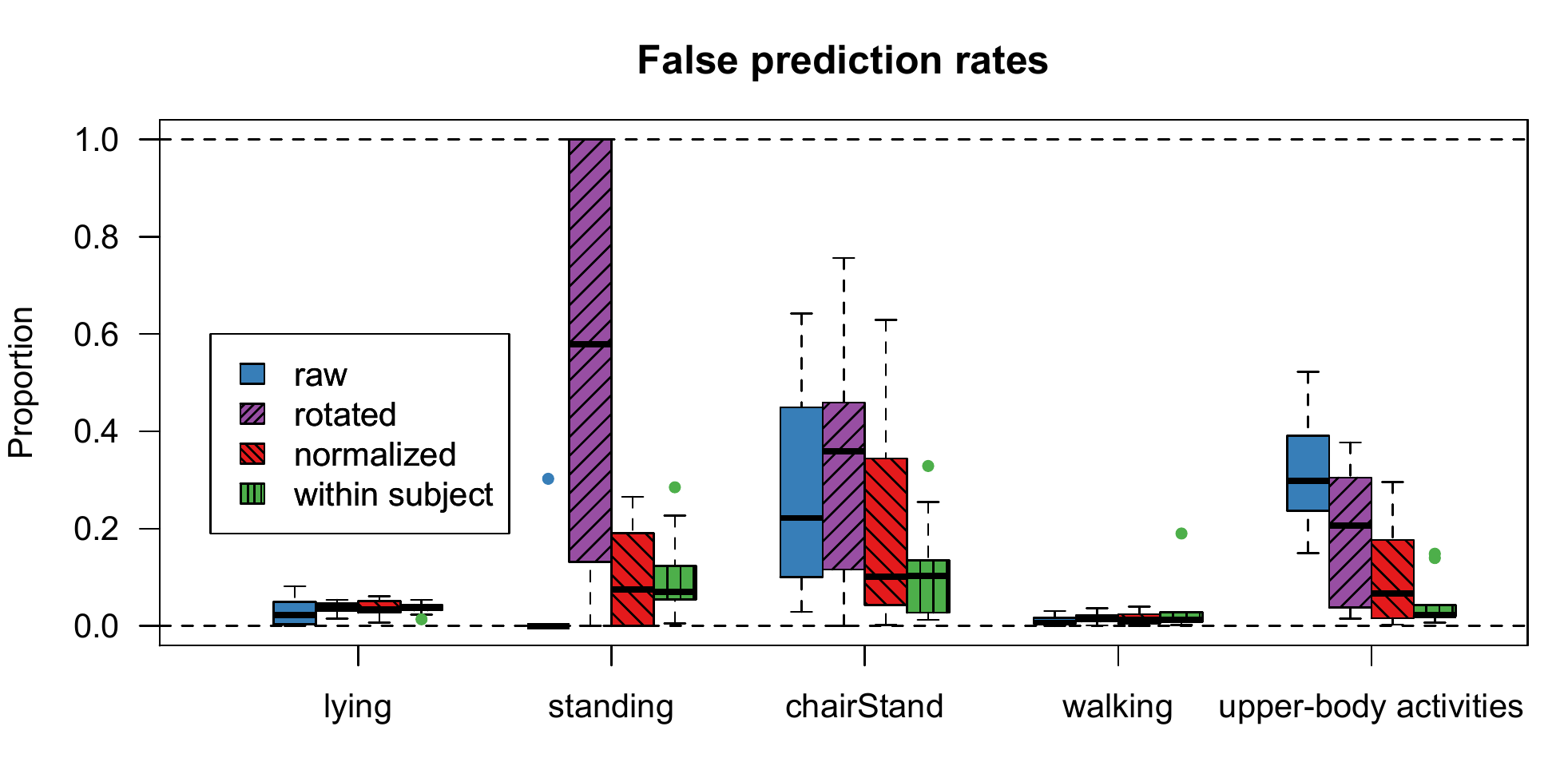}
\end{array}$
\end{center}
\caption{\label{fig_apr}True prediction and false prediction rates for various cases. The term ``raw" means the population-level method is applied
to the raw data; ``rotated" means  the population-level method is applied
to the rotated data without further translation; ``normalized" means the population-level method is applied
to the rotated and translated data; ``within subject" means the within-subject movelet method is applied to each subject. }
\end{figure}

\begin{figure}[htp]
\centering
\includegraphics[width = 6in, angle=0]{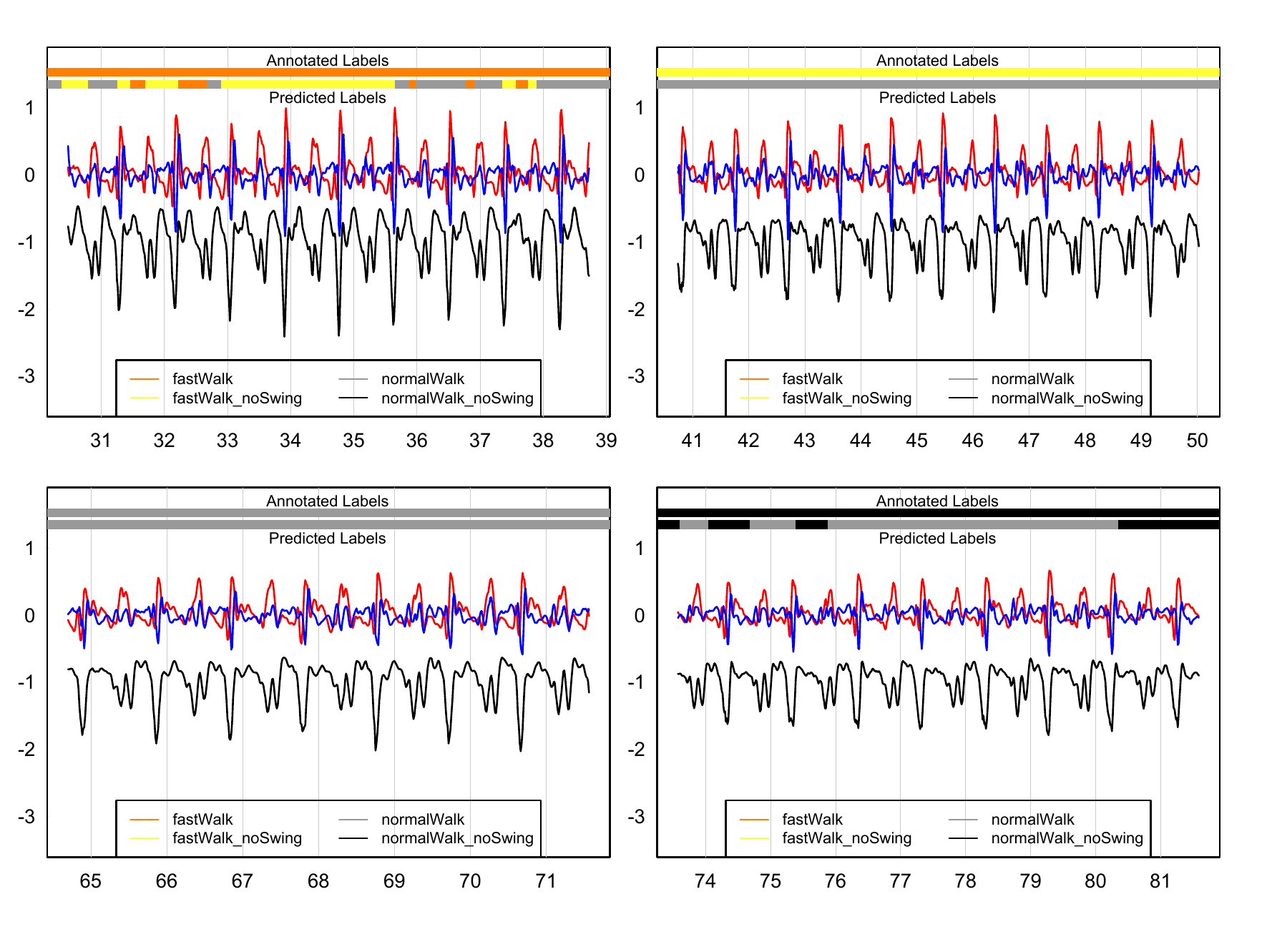}
\caption{\label{plots_walk} Prediction results of one subject's four types of walking. The top panels display data for normalWalk (left column) and normalWalk\_noSwing (right column), the bottom panels display data for fastWalk (left column) and fastWalking\_noSwing (right column). The activity types can also be distinguished by the annotated labels in each plot.}
\end{figure}
\section{Discussion}\label{sec::discuss}
We have proposed a movelet-based method that could predict activities types across subjects. 
Compared to feature extraction-based methods, a significant advantage of the movelet-based methods is that they can achieve high prediction accuracy at sub-second level. We have found in free-living data that walking 2 or 3 steps is common in older individuals. 
To accurately quantify how much time an older person spends on walking, an important biomarker for elderly's health, sub-second labels will be required to capture very short walking periods.

The population-level  movement prediction is a non-trivial step forward  compared to the subject-by-subject methods in \cite{Bai:12} and \cite{He:14}.
Indeed, the methods in \cite{Bai:12} and \cite{He:14} require
training data for all subjects, which is likely unavailable in large epidemiological studies. Moreover, these methods do not consider normalization, which is a crucial problem when devices are worn for many days, are taken off and put back on, and may be subject to unknown movements relative to the body they are attached to. Our proposed methods, require  training data for  a subset of all subjects.

The  data analyzed here were collected in a research lab and  provide only a partial view of the heterogenous activities individuals perform in  free-living environments. It remains unclear how in-lab data prediction methods perform in real life environments. Nonetheless, with the approaches introduced here, we are  mildly optimistic about resolving this much harder problem.

The methods that we proposed here can be developed further. For example,  using the data from all three accelerometers instead of just the hip, may provide better movement recognition of upper body activities. Smoothing the movelets may further reduce the noise in the distance metric. Finally, many movements may actually be quite ambiguously defined. For example, a reaching arm movement could equally well correspond to ``dealing cards", ``placing a plate in drawer", ``eating" or other qualitatively defined movements. Thus, for quantitative research we may need to move to more accurate definitions of movement. Those definitions are likely to be closer in nature to ``movelets" than to current non-quantifiable definitions. This is counter-intuitive and contrary to the way data are currently collected and labeled. However, learning the language of movement will most likely require a careful analysis of the observed data and decomposition into its building blocks. Pairing accelerometry with video cameras, smart phone apps, and other health monitors has the potential to fundamentally change the way we measure health.

\section*{Acknowledgements}
This research was funded in part by the Intramural Research Program of the National Institute on Aging. Xiao, He, and Crainiceanu were supported by Grant Number R01NS060910 from the National Institute of Neurological Disorders and Stroke. This work was also supported by the National Institute of Health through the funded Study of Energy and Aging Pilot (RC2AG036594), Pittsburgh Claude D. Pepper Older American's Independence Center Research Registry (NIH P30 AG024826), a National Institute on Aging Professional Services Contract (HHSN271201100605P), a University of Pittsburgh Department of Epidemiology Small Grant, and the National Institute on Aging Training Grant (T32AG000181). This work represents the opinions of the researchers and not necessarily that of the granting organizations. 
\vspace*{-8pt}

\appendix
\small

\section*{Appendix A: A Test for Systematic Bias}
Let $\mathbf{X}\in \mathbb{R}^3$ be the acceleration vector at an observation point when the subject is standing still.
Suppose  that
$\mathbf{X}$ follows a multivariate normal
 distribution with mean $\boldsymbol{\mu}$ and covariance
$\mathbf{\Sigma}$. Let $\mathbf{X}_1,\ldots,\mathbf{X}_n$ be i.i.d.\ copies
of $\mathbf{X}$. Then $\bar{\mathbf{X}} = n^{-1}\sum_{i=1}^n\mathbf{X}_i$
is normal with mean $\boldsymbol{\mu}$ and covariance $\mathbf{\Sigma}/n$. Testing if there is systematic bias in the
observations is  to testing $\|\boldsymbol{\mu}\| = 1$ where $\|\cdot\|$ denotes the Euclidean norm. We consider the  testing
statistic  $\|\bar{\mathbf{X}}\|^2$,  which has mean
$\|\boldsymbol{\mu}\|^2 + \tr(\mathbf{\Sigma})/n$ and variance $\text{var}(\|\bar{\boldsymbol{X}}\|^2)$. Here $\tr(\cdot)$ denotes
the trace of a square matrix, i.e., the sum of the diagonal entries.
The derivation of the variance term is more involved. We
let $\mathbf{O}\mathbf{D}\mathbf{O}^T$ be the eigendecomposition of $\mathbf{\Sigma}$, where $\mathbf{O}$ is 
an orthogonal matrix with $\mathbf{O}^T\mathbf{O} = \mathbf{O}\mathbf{O}^T = \mathbf{I}_3$ and $\mathbf{D}$ is a diagonal matrix with the
diagonal entries $d_1$, $d_2$ and $d_3$. Now let $\mathbf{Y} = (Y_1, Y_2, Y_3)^T =  \mathbf{O}^T\bar{\mathbf{X}}$, then $\mathbf{Y}$ is 
normal with mean $\boldsymbol{\mu}_y = (\mu_{y1},\mu_{y2},\mu_{y3})^T = \mathbf{O}^T\boldsymbol{\mu}$ and 
covariance matrix $\mathbf{D}/n$. It is easy to show that 
$$
\text{var}(\|\bar{\mathbf{X}}\|^2) = \text{var}(\|\mathbf{Y}\|^2) = \sum_{k=1}^3 \text{var}(Y_k^2) = \sum_{k=1}^3 (6\mu_{yk}^2 d_k/n + 3d_k^2/n^2).
$$
Hence 
\begin{align*}
\text{var}(\|\bar{\mathbf{X}}\|^2) 
& = \sum_{k=1}^3 (6\mu_{yk}^2 d_k/n + 3d_k^2/n^2)\\
& =  \frac{6}{n} \boldsymbol{\mu}_y^T\mathbf{D}\boldsymbol{\mu}_y  + \frac{3}{n^2}\tr(\mathbf{\Sigma}^2)\\
& = \frac{6}{n} \boldsymbol{\mu}^T\mathbf{O}\mathbf{D}\mathbf{O}^{\prime}\boldsymbol{\mu}+  \frac{3}{n^2}\tr(\mathbf{\Sigma}^2)\\
& = \frac{6}{n} \boldsymbol{\mu}^T \mathbf{\Sigma}\boldsymbol{\mu}+  \frac{3}{n^2}\tr(\mathbf{\Sigma}^2).
\end{align*}
By the central limit theorem, 
$
\frac{\|\bar{\mathbf{X}}\|^2-\|\boldsymbol{\mu}\|^2-\tr(\mathbf{\Sigma})/n}{\sqrt{\text{var}(\|\bar{\mathbf{X}}\|^2) }}
$
is approximately normal. Then an $\alpha$-level rejection region for testing $\|\boldsymbol{\mu}\| = 1$ is given by
$$
\left |\|\bar{\mathbf{X}}\|^2-1 \right| > z_{\alpha/2}\sqrt{\frac{6}{n} \boldsymbol{\mu}^T \mathbf{\Sigma}\boldsymbol{\mu}}
$$
Note that we dropped the term $\tr(\mathbf{\Sigma})/n$ in the numerator and the term $3\tr(\mathbf{\Sigma}^2)/n^2$ in the denominator as they are of smaller order than $\|\boldsymbol{\mu}\|^2$ and $6\boldsymbol{\mu}^T \mathbf{\Sigma}\boldsymbol{\mu}/n$, respectively. The term $\boldsymbol{\mu}^T \mathbf{\Sigma}\boldsymbol{\mu}$ is unknown and needs to be estimated. Since under the null hypothesis that
$\|\boldsymbol{\mu}\|=1$ we can derive $\boldsymbol{\mu}^T \mathbf{\Sigma}\boldsymbol{\mu} \leq \|\mathbf{\Sigma}\|_{op}$, where $\|\cdot\|_{op}$ is the operator norm of a matrix,  we use instead
a conservative rejection region
$$
\left |\|\bar{\mathbf{X}}\|^2-1 \right| > z_{\alpha/2}\sqrt{\frac{6}{n} \|\hat{\mathbf{\Sigma}}\|_{op}},
$$
where $\hat{\mathbf{\Sigma}}$ is the sample covariance matrix from the sample $\{\mathbf{X}_1,\ldots, \mathbf{X}_n\}$. We use $\alpha = 0.05$ so that
$z_{\alpha/2} = 1.96$. We display the value of the term
$$
T = \frac{\left |\|\bar{\mathbf{X}}\|^2-1 \right|}{\sqrt{\frac{6}{n} \|\hat{\mathbf{\Sigma}}\|_{op}}}
$$
for all subjects in Table~\ref{web-tab}. The results show that except for subjects 12 and 13, the null hypothesis of $
\|\boldsymbol{\mu}\|$ is always rejected.

\begin{table}
\caption{\label{web-tab}Testing statistic $T$ for the 20 subjects}
\centering
\begin{tabular}{|c|c|}
\hline
Subject&$T$\\\hline
1&209.38\\
2&76.11\\
3&473.84\\
4&103.94\\
5&71.17\\
6&209.28\\
7&183.57\\
8&84.22\\
9&134.98\\
10&365.15\\
11&228.35\\
12&0.65\\
13&1.93\\
14&22.47\\
15&9.59\\
16&191.47\\
17&83.44\\
18&11.68\\
19&11.91\\
20&165.02\\\hline
\end{tabular}
\end{table}

\section*{Appendix B: Derivation of Rotation Matrices}

\begin{lem}
Let $\mathbf{a}_1$ and $\mathbf{a}_2$ be two vectors in $\mathbb{R}^3$ and $\mathbf{a}_1\times\mathbf{a}_2 \neq 0$. Let 
\begin{align*}
\mathbf{b}_1 &= \frac{\mathbf{a}_1}{\|\mathbf{a}_1\|_2},\\
\mathbf{b}_2 &= \frac{\mathbf{a}_2-(\mathbf{a}_2^T\mathbf{b}_1)\mathbf{b}_1}{\|\mathbf{a}_2-(\mathbf{a}_2^T\mathbf{b}_1)\mathbf{b}_1\|_2}.
\end{align*}
Then $\mathbf{b}_1$ and $\mathbf{b}_2$ are two unit vectors and are orthogonal to each other. We write $\mathbf{a}_1$ and $\mathbf{a}_2$ as
\begin{align*}
\mathbf{a}_1 &= c_1 \mathbf{b}_1,\\
\mathbf{a}_2 & = c_2\mathbf{b}_1 + c_3\mathbf{b}_2.
\end{align*}
Let
$$
\mathbf{R}^* =  \arg \min_{\mathbf{R}\in \mathbb{R}^{3\times 3}:\,\, \mathbf{R}^T=\mathbf{R}^{-1}  \text{ and } \mathbf{e}_3^T\mathbf{R}(\mathbf{a}_1\times \mathbf{a}_2)>0}\left(\|\mathbf{R}\mathbf{a}_1 + \mathbf{e}_1\|_2^2 +\|\mathbf{R}\mathbf{a}_2 + \mathbf{e}_2\|_2^2 \right).
$$
Then $\mathbf{R}^*$ is unique with the expression
$$
\left(\begin{array}{ccc}
\cos(\theta)&-\sin(\theta)&0\\
\sin(\theta)&\cos(\theta)&0\\
0&0&1
\end{array}\right)^T[\mathbf{b}_1,\mathbf{b}_2,\mathbf{b}_1\times\mathbf{b}_2]^T,
$$
where 
\begin{align*}
\cos(\theta) &= -\frac{c_1+c_3}{\sqrt{(c_1+c_3)^2 + c_2^2}},\\
\sin(\theta) &= \frac{c_2}{\sqrt{(c_1+c_3)^2 + c_2^2}}.
\end{align*}
\end{lem}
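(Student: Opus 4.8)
The plan is to reduce the constrained matrix optimization to a one--parameter problem by changing coordinates to the frame $\{\mathbf{b}_1,\mathbf{b}_2,\mathbf{b}_1\times\mathbf{b}_2\}$. First I would record that these three vectors form a right--handed orthonormal basis, so the matrix $\mathbf{B}=[\mathbf{b}_1,\mathbf{b}_2,\mathbf{b}_1\times\mathbf{b}_2]$ is a rotation and $\mathbf{B}^T$ sends $\mathbf{b}_1,\mathbf{b}_2,\mathbf{b}_1\times\mathbf{b}_2$ to $\mathbf{e}_1,\mathbf{e}_2,\mathbf{e}_3$. Writing every feasible $\mathbf{R}$ as $\mathbf{R}=\mathbf{Q}\mathbf{B}^T$ reparametrizes the orthogonal matrices by $\mathbf{Q}$, and since $\mathbf{a}_1=c_1\mathbf{b}_1$ and $\mathbf{a}_2=c_2\mathbf{b}_1+c_3\mathbf{b}_2$ we get $\mathbf{B}^T\mathbf{a}_1=c_1\mathbf{e}_1$ and $\mathbf{B}^T\mathbf{a}_2=c_2\mathbf{e}_1+c_3\mathbf{e}_2$, both lying in the $\mathbf{e}_1\mathbf{e}_2$--plane. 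Moreover $\mathbf{a}_1\times\mathbf{a}_2=c_1c_3(\mathbf{b}_1\times\mathbf{b}_2)$ with $c_1=\|\mathbf{a}_1\|_2>0$ and $c_3>0$ (linear independence), so $\mathbf{B}^T(\mathbf{a}_1\times\mathbf{a}_2)=c_1c_3\mathbf{e}_3$ and the orientation constraint $\mathbf{e}_3^T\mathbf{R}(\mathbf{a}_1\times\mathbf{a}_2)>0$ becomes simply $Q_{33}>0$.

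Next I would simplify the objective. Because $\mathbf{Q}$ is orthogonal, $\|\mathbf{Q}\mathbf{v}\|_2=\|\mathbf{v}\|_2$, so expanding the two squared norms leaves only the cross terms depending on $\mathbf{Q}$:
\[
\|\mathbf{R}\mathbf{a}_1+\mathbf{e}_1\|_2^2+\|\mathbf{R}\mathbf{a}_2+\mathbf{e}_2\|_2^2 = \text{const}+2\bigl(c_1Q_{11}+c_2Q_{21}+c_3Q_{22}\bigr).
\]
Thus minimizing the objective is equivalent to minimizing $h(\mathbf{Q})=c_1Q_{11}+c_2Q_{21}+c_3Q_{22}=\tr(\mathbf{Q}\mathbf{M})$, where $\mathbf{M}$ has nonzero entries $M_{11}=c_1$, $M_{12}=c_2$, $M_{22}=c_3$ and a vanishing third row and column.

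The crux of the argument --- and where I expect the main difficulty --- is showing that the minimizer fixes the out--of--plane axis, i.e.\ that $\mathbf{Q}^*$ is a rotation about $\mathbf{e}_3$. The functional $h$ reads only the top--left $2\times2$ block of $\mathbf{Q}$, yet $\mathbf{Q}$ is globally constrained, so this coupling needs care. My plan is to exploit that the third row and column of $\mathbf{M}$ vanish, so there is an SVD $\mathbf{M}=\mathbf{U}\mathbf{\Sigma}\mathbf{V}^T$ with $\mathbf{u}_3=\mathbf{v}_3=\mathbf{e}_3$, block--diagonal $\mathbf{U},\mathbf{V}$, and $\mathbf{\Sigma}=\mathrm{diag}(\sigma_1,\sigma_2,0)$. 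Setting $\mathbf{Z}=\mathbf{V}^T\mathbf{Q}\mathbf{U}$ gives $\tr(\mathbf{Q}\mathbf{M})=\sigma_1Z_{11}+\sigma_2Z_{22}\ge-(\sigma_1+\sigma_2)$, with equality only when $Z_{11}=Z_{22}=-1$; orthonormality of $\mathbf{Z}$ then forces $\mathbf{Z}=\mathrm{diag}(-1,-1,\pm1)$, and feasibility $Q_{33}=Z_{33}>0$ selects $\mathbf{Z}=\mathrm{diag}(-1,-1,1)$ uniquely. Since $\mathbf{U},\mathbf{V}$ are block--diagonal this makes $\mathbf{Q}^*$ block--diagonal with lower--right entry $1$, hence a rotation about $\mathbf{e}_3$. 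Computing the singular values from $\sigma_1^2+\sigma_2^2=c_1^2+c_2^2+c_3^2$ and $\sigma_1\sigma_2=c_1c_3$ yields $\sigma_1+\sigma_2=\sqrt{(c_1+c_3)^2+c_2^2}$.

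Finally I would restrict to $\mathbf{Q}=\mathbf{G}^T$ with $\mathbf{G}$ the planar rotation by $\theta$, for which $Q_{11}=Q_{22}=\cos\theta$ and $Q_{21}=-\sin\theta$, so that $h=(c_1+c_3)\cos\theta-c_2\sin\theta$. This single--variable function has a unique minimizer (since $(c_1+c_3,c_2)\neq\mathbf{0}$), found by setting the derivative to zero and checking the sign of the second derivative; the minimizing angle satisfies exactly $\cos\theta=-(c_1+c_3)/\sqrt{(c_1+c_3)^2+c_2^2}$ and $\sin\theta=c_2/\sqrt{(c_1+c_3)^2+c_2^2}$, attaining the global lower bound $-\sqrt{(c_1+c_3)^2+c_2^2}$ established above. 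Unwinding $\mathbf{R}^*=\mathbf{Q}\mathbf{B}^T=\mathbf{G}^T\mathbf{B}^T$ gives the stated expression, and uniqueness of both $\mathbf{Z}$ and $\theta$ gives uniqueness of $\mathbf{R}^*$.
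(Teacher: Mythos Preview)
Your proof is correct and in fact more complete than the paper's. Both arguments terminate in the same one--variable minimization of $(c_1+c_3)\cos\theta - c_2\sin\theta$, but the paper reaches that step by simply asserting, for an ``arbitrary'' orthogonal $\mathbf{R}$, that $\mathbf{R}^T\mathbf{e}_1 = \cos\theta\,\mathbf{b}_1 + \sin\theta\,\mathbf{b}_2$ and $\mathbf{R}^T\mathbf{e}_2 = -\sin\theta\,\mathbf{b}_1 + \cos\theta\,\mathbf{b}_2$; this forces $\mathbf{R}^T\mathbf{e}_1,\mathbf{R}^T\mathbf{e}_2\in\mathrm{span}\{\mathbf{b}_1,\mathbf{b}_2\}$, which is false for generic orthogonal $\mathbf{R}$ and is never argued to hold at the optimum. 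Your Procrustes/SVD route --- writing $\mathbf{R}=\mathbf{Q}\mathbf{B}^T$, identifying the objective with $\mathrm{tr}(\mathbf{Q}\mathbf{M})$, and using $\sigma_1,\sigma_2>0$ together with $Q_{33}=Z_{33}>0$ to force $\mathbf{Z}=\mathrm{diag}(-1,-1,1)$ --- supplies precisely the missing justification that the minimizer is block--diagonal in the $\{\mathbf{b}_1,\mathbf{b}_2,\mathbf{b}_1\times\mathbf{b}_2\}$ frame, and it delivers uniqueness cleanly. The trade--off is a little extra machinery (the SVD and the identity $(\sigma_1+\sigma_2)^2=(c_1+c_3)^2+c_2^2$); the paper's route is shorter only because it silently skips this reduction, after which the two arguments coincide line for line.
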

\begin{proof}
For an arbitrary rotation matrix $\mathbf{R}$, $\mathbf{R}^T = \mathbf{R}^{-1}$ is also a rotation matrix. Hence $\mathbf{R}^T\mathbf{e}_1$ and $\mathbf{R}^T\mathbf{e}_1$ remain orthogonal unit vectors. There exists an $\theta\in[0,\pi]$ such that
\begin{equation}
\label{lem_eq1}
\begin{split}
\mathbf{R}^T\mathbf{e}_1 &= \cos(\theta)\mathbf{b}_1 + \sin(\theta)\mathbf{b}_2,\\
\mathbf{R}^T\mathbf{e}_2 &= -\sin(\theta)\mathbf{b}_1 + \cos(\theta)\mathbf{b}_2.
\end{split}
\end{equation}
It follows that 
\begin{align*}
&\|\mathbf{R}\mathbf{a}_1 + \mathbf{e}_1\|_2^2 +\|\mathbf{R}\mathbf{a}_2 + \mathbf{e}_2\|_2^2\\
&=\|\mathbf{a}_1 +\mathbf{R}^T \mathbf{e}_1\|_2^2 +\|\mathbf{a}_2 + \mathbf{R}^T\mathbf{e}_2\|_2^2\\
&=\|(c_1 + \cos(\theta))\mathbf{b}_1 + \sin(\theta)\mathbf{b}_2\|_2^2 + \|(c_2 -\sin(\theta))\mathbf{b}_1 + (c_3 + \cos(\theta))\mathbf{b}_2\|_2^2\\
&=(c_1 + \cos(\theta))^2 + \sin^2(\theta) + (c_2-\sin(\theta))^2 + (c_3 + \cos(\theta))^2\\
&= 2 + c_1^2 +c_2^2 + c_3^2 +  2(c_1+c_3)\cos(\theta) - 2c_2\sin(\theta).
\end{align*}
Therefore, $\|\mathbf{R}\mathbf{a}_1 + \mathbf{e}_1\|_2^2 +\|\mathbf{R}\mathbf{a}_2 + \mathbf{e}_2\|_2^2$ is minimized if
$\cos(\theta) = -(c_1+c_3)/\sqrt{(c_1+c_3)^2 + c_2^2}$ and $\sin(\theta) =c_2/\sqrt{(c_1+c_3)^2 + c_2^2}$.
By~(\ref{lem_eq1}),
\begin{align*}
\mathbf{R}^T\mathbf{e}_3& = \mathbf{R}^T(\mathbf{e}_1\times\mathbf{e}_2)\\
&=(\mathbf{R}^T\mathbf{e}_1)\times(\mathbf{R}^T\mathbf{e}_2)\\
&=\mathbf{b}_1\times\mathbf{b}_2.
\end{align*}
Then 
$$
\mathbf{R}^T[\mathbf{e}_1,\mathbf{e}_2,\mathbf{e}_3] 
= [\mathbf{b}_1,\mathbf{b}_2,\mathbf{b}_1\times\mathbf{b}_2]\left(\begin{array}{ccc}
\cos(\theta)&-\sin(\theta)&0\\
\sin(\theta)&\cos(\theta)&0\\
0&0&1
\end{array}\right).
$$
It's easy to verify that $\mathbf{R}^{-1} = \mathbf{R}^T$ and the proof is complete.
\end{proof}

\bibliographystyle{apalike}
\small
\bibliography{movement_prediction.bib}

\end{document}